\newcommand{\xqedhere}[2]{%
  \rlap{\hbox to#1{\hfil\llap{\ensuremath{#2}}}}}
\begin{document}

\title{Global Properties of Nested Network Model with Application to Multi-Epitope HIV/CTL Dynamics}

%\titlerunning{Short form of title}        % if too long for running head

\author{Cameron Browne         
}

%\authorrunning{Short form of author list} % if too long for running head

\institute{Mathematics Department, University of Louisiana at Lafayette, Lafayette, LA \\
            \email{cambrowne@louisiana.edu}           %  \\
%             \emph{Present address:} of F. Author  %  if needed
  }

\maketitle

\begin{abstract}
Mathematical modeling and analysis can provide insight on the dynamics of ecosystems which maintain biodiversity in the face of competitive and prey-predator interactions.   Of primary interests are the underlying structure and features which stabilize diverse ecological networks.  Recently Korytowski and Smith \cite{korytowski2015nested} proved that a perfectly nested infection network, along with appropriate life history trade-offs, leads to coexistence and persistence of bacteria-phage communities in a chemostat model.  In this article, we generalize their model in order to apply it to the within-host dynamics virus and immune response, in particular HIV and CTL (Cytotoxic T Lymphocyte) cells.  Our model can describe sequential viral escape from dominant immune responses and rise in subdominant immune responses, consistent with observed patterns of HIV/CTL evolution.  We find a Lyapunov function for the system which leads to rigorous characterization of persistent viral and immune variants, along with informing upon equilibria stability and global dynamics.  Results are interpreted in the context of within-host HIV/CTL evolution and numerical simulations are provided. 
\keywords{mathematical model \and predator-prey \and ecosystem \and nested network\and virus dynamics\and immune response\and HIV \and uniform persistence \and Lyapunov function \and global stability}
% \PACS{PACS code1 \and PACS code2 \and more}
% \subclass{MSC code1 \and MSC code2 \and more}
\end{abstract}

\section{Introduction}\label{s2}

Mathematical models of ecological communities with both competitive and predator-prey interactions have often been studied in order to determine structure, persistence and dynamics of ecosystems.   The general model setup can include $n$ predators which predate on $m$ competing prey species, where an $m\times n$ matrix describes the bipartite network of interactions between prey and predators.  In this paper, we focus on a specific structure for the interaction network, namely a perfectly nested network.  In a perfectly nested scenario, the specialist predator can target the most susceptible prey, the next most specialized predator
targets the most susceptible prey and the second most susceptible prey, and so on.  Nested networks have been of recent interest in explaining the biodiversity and structure often observed in bacteria phage communities \cite{jover2013mechanisms,korytowski2015nested,korytowski2,weitz2013phage}.  Jover et al. show that the hierarchical nested structure can be maintained in a Lotka-Volterra model with trade-offs between the competitive ability of
the bacteria hosts and defense against infection and, on
the part of virus, between virulence and transmissibility
versus host range \cite{jover2013mechanisms}.  Korytowski and Smith consider a chemostat model with a perfectly nested bacteria phage network and provide a rigorous argument for persistence of the community when these trade-offs are assumed \cite{korytowski2015nested}.  In this article we extend the chemostat model of Korytowski and Smith \cite{korytowski2015nested} by including mortality of the prey and predators (bacteria and phages, respectively, in their context).  

Our primary biological motivation is to apply the model to the within-host population dynamics of virus and immune response (interpreted as prey and predator, respectively, here).  In particular, we consider interactions between HIV and CTL (Cytotoxic T Lymphocyte) immune effectors.  CTL immune effectors recognize pathogen-derived proteins (epitopes) presented on the surface of infected cells to mediate their killing.  There is an extensive repertoire of distinct CTL clones targeting different epitopes.  However, the ability of HIV to rapidly evolve allows for the rise of several mutant strains with mutations conferring resistance to attack at different epitopes.  The ensuing arms race creates an evolving network of viral strains and CTL with variable levels of \emph{(strain) reactivity}, i.e. epitopes shared between virus strains.  

There is extensive research on modeling the within-host dynamics of multiple variants of virus and immune response assuming different virus-immune network structures.  One class of HIV/CTL multi-variant models, first studied by Nowak et al. \cite{nowak1996population}, are extensions of the standard virus model \cite{Perelson2} and consider strain-specific CTLs (one-to-one reactivity) with mass-action killing and activation rates.  In this case, the diversity of the persistent virus set has been characterized in terms of model parameters \cite{bobko2015singularly,iwasa2004some,souza2011global}.   However the assumption of strain-specific immune response does not correspond to the biological reality that CTLs are specific to epitopes and, in general, multiple epitopes will be shared among virus strains.   Another set of models \cite{Nowak2} includes growth of multiple virus strains (without the target cell limitation present in the aforementioned standard virus model) inhibited by CTL immune responses directed at two distinct epitopes.  In this multi-epitope model, the antigenic variation induces oscillations (antigenic oscillations) and coexistence of the distinct CTLs \cite{Nowak2}.  Similar models (with only linear immune activation rates, as opposed to mass-action) have been studied in the context of malaria infection and also have displayed antigenic oscillations, along with complex dynamics \cite{Gupta,PilyuginMalaria,Blyuss}.  

While the virus-immune epitope interaction network generally can be quite complex, patterns of viral escape, immunodominance hierarchies and cross-reactivity often emerge.  A nested interaction network can be representative of successive (partial) escape from a dominant epitope followed by the rise of a subdominant epitope, resulting in a nested collection of viral strains where the specialist dominant immune response attacks only the wild-type virus strain and the weaker immune response attacks a broad range of viral strains with varying resistance profiles (attacking at a conserved epitope).  This constrains the mutational pathway of the virus so that resistance to multiple epitopes is built sequentially in a perfectly nested order.  The successive rise of more broadly resistant prey (coming with a fitness cost) and weaker but more generalist predators, in a perfectly nested fashion, is the route to persistence of nested bacteria-phage communities argued in \cite{korytowski2015nested}.

There has also been much interest in quantifying rates, factors and patterns of viral immune escape from multiple epitopes \cite{Althaus,Vitaly1,Vitaly2,liu2013vertical,kessinger2015inferring,leviyang2015broad,pandit2014reliable,vanDeutekom}.  Models with the full range of immune epitope reactivity networks have been considered, but analysis of such models is limited in general by the multitude of possible escape pathways which may be explored concurrently (for $n$ epitopes there are $2^n$ possible paths to full resistance).  The special case of nested emergence of multi-epitope resistance has been studied and there is some evidence that nestedness is a feature of HIV-CTL interactions \cite{kessinger2015inferring,liu2013vertical,vanDeutekom}.  Indeed, the immune escapes in the early phase of HIV infection are shown to occur sequentially due to strong selection \cite{da2012dynamics,kessinger2015inferring}.  Although concurrent escape through different pathways does occur during chronic infection, it is often resolved in favor of a given escape variant, which then serves as the basis of another concurrent escape that is resolved in favor of another escape variant, and so on; resulting in a highly nested network of multi-epitope resistance \cite{leviyang2015broad,pandit2014reliable,vanDeutekom}.  Also, we will show that successive viral invasion and rise of subdominant immune response becomes slower and rarer in a nested network as it's size (breadth of virus/immune diversity) increases, which is consistent with observed patterns in HIV.   In addition, a perfectly nested network offers a particular case of multi-epitope interactions which is amenable to mathematical analysis and can promote coexistence, as shown for phage-bacteria models in \cite{korytowski2015nested}.  Analysis of more general virus-immune networks, which allow for multiple escape pathways, will be considered in a future article.

Ecological networks of prey and predators, such as the virus-immune response and bacteria phage examples, can exhibit significant biodiversity and nestedness offers a compelling way to explain the coexistence of multiple competing species.  Mathematically the question of coexistence versus competitive exclusion of species can be framed in terms of uniform persistence or permanence \cite{hofbauer1998evolutionary,smith2011dynamical}.  Korytowski and Smith prove uniform persistence of the nested bacteria phage community in a chemostat model \cite{korytowski2015nested}.  In this paper, we provide criteria for uniform persistence or extinction of the populations in our generalization
of their chemostat model.  The method of proof utilized here is different than in \cite{korytowski2015nested}, in particular we find a Lyapunov function, which allows for a more complete global analysis.  In particular, we rigorously characterize which species persist and which die out.  While we are not able to prove global convergence to equilibria in general, global stability is established in the case of two or less persistent immune responses.  Also our method highly constrains the dynamics on the attracting set, so that we conjecture global stability of the appropriate equilibrium to generally hold.  Our Lyapunov function approach is similar to works by Bobko and Zubelli \cite{bobko2015singularly} on the one-to-one virus-immune reactivity network model, Korytowski and Smith \cite{korytowski2} on one-to-one and nested phage-bacteria networks in special cases of the Lotka-Volterra model, and Wolkowicz \cite{wolkowicz1989successful} on food webs in a chemostat model.

This article is organized as follows.  In Section \ref{SecModel}, the general mathematical model of HIV/CTL dynamics is introduced and shown to be well-posed, and assumptions along with motivation are provided for the case of perfectly nested interaction networks.  In Section \ref{SecDynamics}, the main result (Theorem \ref{mainThm}) is stated, which gives conditions for the persistence of distinct viral and immune variants in the model and describes the ensuing dynamics with respect to system equilibria.  In Section \ref{SecApp}, analytical results are discussed in the context of within-host HIV/CTL evolution, accompanied by numerical simulations.  Finally, Section \ref{SecProofs} contains proofs of the theorems, in particular the main result is proved via Lyapunov functions and applications of LaSalle's invariance principle.

\section{Mathematical model} \label{SecModel}
\subsection{General model \& boundedness of solutions}

As mentioned in the introduction, our primary motivation is the within-host dynamics of HIV and CTL cells, thus we view the prey and predators in our model as virus and immune responders respectively.  Consider the following differential equation system:
\begin{align}
\frac{dX}{dt} &= b-aX- X\sum_{i=1}^n \beta_iY_i, \notag \\
\frac{dY_i}{dt} &= \beta_iY_iX-\delta_iY_i-Y_i\sum_{j=1}^m r_{ij} Z_j , \quad i=1,\dots,m \label{ode2}  \\
  \frac{dZ_j}{dt} &= Z_j\sum_{i=1}^n q_{ij}Y_i-  \mu_jZ_j, \notag \quad j=1,\dots,n.
\end{align}

The variables $X(t)$ and $Y_i(t)$ denote the concentration of uninfected target cells, and infected cells of virus strain $i$, respectively.    $Z_j(t)$ is the concentration of immune effector cells responding to a particular epitope, labeled $j$.   The function $f(X)=b-aX$ represents the net growth rate of the uninfected cell population.  The parameters $\beta_i$ and $d_i$ are the infection rate and $\delta_i$ is the death rate for infected cells infected with virus strain $i$.  The parameter $r_{ij}$ describes the killing rate of immune population $Z_j$ on virus strain $i$, whereas $q_{ij}$ describes the corresponding activation rate for $Z_j$.   The parameter $\mu_j$ denotes the death rate of the immune effector cells.  In the present paper, we assume that virus load (the abundance of virions) is proportional to the amount of (productively) infected cell.  This assumption has frequently been made for HIV since the dynamic of free virions occurs on a much faster time scale than the other variables.  Another reason for not explicitly including free virus in our present work is to keep the tri-trophic ecological structure of resource-prey-predators, allowing for more general applicability in community ecology.

Note that the mass-action forms of the immune killing and activation rates are representative of actions that occur proportionally to the strength of interaction, denote by $\psi_{ij}$, between immune cells and epitopes on the surface of infected cells.  Thus it is reasonable to make the assumption that the killing and activation rate for $Z_j$ are proportional to $\psi_{ij}$, i.e. of the form $r_{ij}=r_j\psi_{ij}$ and $q_{ij}=q_jr_j\psi_{ij}$, respectively.  The $m\times n$ matrix $R=\left(r_{ij}\right)$ gives the reactivity of the epitopes with respect to the antigenic (virus) variation.  Let $\mathcal S_j:=\left\{ i\in [1,m] : r_{ij}>0 \right\}$ denote the reactivity range for each immune response $Z_j$.    
The model becomes:
\begin{align}
\frac{dX}{dt} &= b-aX- X\sum_{i=1}^n \beta_iY_i, \notag \\
\frac{dY_i}{dt} &= \beta_iY_iX-\delta_iY_i-Y_i\sum_{j: i\in \mathcal S_j} r_{ij} Z_j , \quad i=1,\dots,m \label{ode3} \\
  \frac{dZ_j}{dt} &= q_jZ_j\sum_{i\in \mathcal S_j} r_{ij}Y_i-  \mu_jZ_j, \quad j=1,\dots,n. \notag
\end{align}

\begin{proposition}\label{bounded}
Consider the system (\ref{ode3}) with non-negative initial conditions in $\mathbb R^{m+n+1}$.  Solutions remain non-negative for all time $t$ and there exists a bounded set in $\mathbb R^{m+n+1}$ which attracts all solutions.  
\end{proposition}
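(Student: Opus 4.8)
The plan is to establish non-negativity first and then construct a suitable linear combination of the variables whose derivative along trajectories is eventually negative outside a bounded region. For non-negativity, I would argue on each coordinate hyperplane: the $X$-equation gives $dX/dt \ge b - aX - X\sum_i \beta_i Y_i$, and more to the point, whenever $X=0$ we have $dX/dt = b > 0$, so $X(t)$ cannot cross zero from positive values; whenever $Y_i = 0$ we have $dY_i/dt = 0$, so the hyperplane $\{Y_i = 0\}$ is invariant and by uniqueness $Y_i$ stays non-negative; similarly $\{Z_j = 0\}$ is invariant since $dZ_j/dt = 0$ there. Hence the non-negative orthant $\mathbb{R}^{m+n+1}_{\ge 0}$ is forward invariant. (One should also note solutions exist at least locally by smoothness of the vector field, with global existence following from the bound established next.)

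For boundedness, the key step is to find weights that cancel the indefinite cross terms. Consider $W = X + \sum_{i=1}^m Y_i + \sum_{j=1}^n \frac{1}{q_j} Z_j$. Differentiating along (\ref{ode3}), the interaction terms telescope: the $-X\sum_i \beta_i Y_i$ from $dX/dt$ cancels against $\sum_i \beta_i Y_i X$ from the $dY_i/dt$ terms, and the $-\sum_i Y_i \sum_{j:i\in\mathcal S_j} r_{ij} Z_j$ cancels against $\sum_j \frac{1}{q_j}\cdot q_j Z_j \sum_{i\in\mathcal S_j} r_{ij} Y_i$. What remains is
\begin{equation}
\frac{dW}{dt} = b - aX - \sum_{i=1}^m \delta_i Y_i - \sum_{j=1}^n \frac{\mu_j}{q_j} Z_j \le b - \eta\, W,
\notag
\end{equation}
where $\eta = \min\{a,\ \delta_1,\dots,\delta_m,\ \mu_1,\dots,\mu_n\} > 0$. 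A standard comparison (Gronwall) argument then gives $\limsup_{t\to\infty} W(t) \le b/\eta$, so the set $\{W \le b/\eta + \varepsilon\}$ intersected with the non-negative orthant is absorbing, and since each variable is non-negative and bounded by $W$ (up to the constant $q_j$), this set is bounded in $\mathbb{R}^{m+n+1}$ and attracts all solutions.

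I do not expect a genuine obstacle here: the model has the conservative tri-trophic structure by design, so the weighted sum is essentially dictated by requiring the $\beta_i$ and $r_{ij}$ terms to cancel. The only minor care points are (i) confirming the weights $1/q_j$ on $Z_j$ are exactly right given that the activation rate carries the extra factor $q_j$, and (ii) noting that local existence plus the a priori bound on $W$ rules out finite-time blow-up, giving global solutions — which is what makes the "attracts all solutions" claim meaningful.
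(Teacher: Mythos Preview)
Your proposal is correct and follows essentially the same approach as the paper: both use invariance of the coordinate hyperplanes for non-negativity and the weighted sum $X+\sum_i Y_i+\sum_j Z_j/q_j$ with $\eta=\min\{a,\delta_i,\mu_j\}$ for the absorbing bound. Your write-up is in fact a bit more careful than the paper's, which states $\dot S = b - cS$ where it should have the inequality $\dot S \le b - cS$, and which omits the remark about global existence that you include.
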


\subsection{Perfectly nested network} \label{NestedIntro}
In the remainder of this paper, we assume a perfectly nested interaction network structure.  Explicitly, suppose that $m=n$ and for each immune response $Z_j$, $j=1,\dots n$, the reactivity range is $\mathcal S_j=[1,j]$.  Then the reactivity matrix $R$ is $n\times n$ and upper triangular.  We also make the following assumptions: 
\begin{itemize}
\item[(i)] $\forall j, \  i\in \mathcal S_j \Rightarrow r_{ij}=r_j>0$, i.e. immune response killing/interaction rate is independent of virus strain in reactivity range.    \item[(ii)] $\delta_i=\delta$ for $i=1,\dots,n$.   
\end{itemize}
We now rescale the model and introduce the following quantities: 
\begin{align*}
x&=\frac{a}{b}X, \quad y_i=\frac{\delta}{b}Y_i, \quad z_j=\frac{r_j}{ \delta}Z_j, \\
\tau&=at, \quad \gamma=\frac{\delta}{a}, \quad \sigma_j=\frac{\mu_j}{ a}, \\
\mathcal R_i&=\frac{b\beta_i}{a\delta},  \quad \mathcal I_j=\frac{bq_jr_j}{ \delta\mu_j}, 
\end{align*}
where $\mathcal R_i$ represents the basic reproduction number of virus strain $i$ and $\mathcal I_j$ represents the basic reproduction number of immune population $z_j$.  For convenience, we introduce the following notation:
\begin{align*}
s_i=\frac{1}{\mathcal I_i}, \qquad  s_0=0
\end{align*}
Denoting the derivative $\frac{d}{d\tau}$ with ``dot notation'', the model becomes:
\begin{align}
\dot x &= 1-x- x\sum_{i=1}^n \mathcal R_i y_i, \notag \\
\dot y_i &= \gamma y_i\left(\mathcal R_i x -1 -\sum_{j\geq i} z_j \right), \quad i=1,\dots,n \label{ode4}  \\
  \dot z_i &= \frac{\sigma_i}{s_i} z_i\left(\sum_{j\leq i} y_j -s_i \right), \quad i=1,\dots,n \notag
\end{align}
Note the difference between our above model (\ref{ode4}) and the chemostat model analyzed by Korytowski and Smith  \cite{korytowski2015nested} is that the latter assumes a single dilution rate with no mortality allowing for the dimension of the system to be reduced, which was crucial for their method of analysis.  

The perfectly nested structure in model (\ref{ode4}) can arise when considering an immunodominance hierarchy and sequential viral escape from successive immune responses.   Suppose there are $n$ epitopes each targeted by a distinct CTL line $z_1,\dots,z_n$.  Generally, there are $2^n$ possible viral strains based on whether they are resistant to attack on a particular epitope, however if mutation follows a perfectly nested pattern the dimension is reduced substantially.  Without loss of generality, order the immune responses in decreasing order with respect to CTL reproduction number: 
$ \mathcal I_1>\mathcal I_2>\dots > \mathcal I_n$.
Initially an individual is infected with a single virus strain, called the wild type or founder strain, denoted here by the variable $y_1$.  The dominant CTL immune response $z_1$ will attack this strain imposing selection pressure for an escape mutant.  In general, $y_1$ is subject to attack at $n$ epitopes, and both immunodominance and the fitness costs of mutation may factor into the escape pathway followed.    However in building the perfectly nested network, we assume that the first mutant virus, $y_2$, is resistant to $z_1$, with a fitness cost associated to the mutation.   Indeed we might argue that if mutation is sufficiently rare, the most dominant immune response competitively excludes all others with only one virus strain present, so that an initial mutant $y_2$ only would arise resistant to $z_1$.   After the mutant $y_2$ rises, the subdominant immune response $z_2$ against another epitope will rise.   Although it is possible that a mutant will rise which is resistant to $z_2$ and susceptible to $z_1$ \cite{batorsky2014route}, we assume that the next mutant will be resistant to both $z_2$ and $z_1$ as in \cite{kessinger2015inferring}.  This nested sequential pattern of multi-epitope resistance may continue in this fashion, with $z_n$ being the ``broadest and weakest'' immune response, and $y_n$ or $y_{n+1}$ having accumulated mutations at $n-1$ or $n$ epitopes, each with a fitness cost.

\section{Dynamics of Perfectly Nested Model} \label{SecDynamics}
  Consider the rescaled model with perfectly nested interaction network (\ref{ode4}).  Without loss of generality, we suppose that the virus strains are ordered such that their corresponding reproduction numbers are in non-increasing order.  Given the motivation described in Section \ref{NestedIntro} we also assume that the immune responses are ordered with non-increasing dominance.  In addition, assume that the viral reproduction numbers are distinct from each other, and likewise for the immune reproduction numbers.  Thus the following holds:
 \begin{align}
 \mathcal R_1&>\mathcal R_2>\dots > \mathcal R_n \ \ \text{and} \ \   \mathcal I_1>\mathcal I_2>\dots > \mathcal I_n   \label{conditions}
 \end{align}
 
 Similar to the analysis contained in \cite{jover2013mechanisms,korytowski2015nested}, for $k \in \left\{1,2,\dots, n\right\}$ define the following quantities:
 \begin{align*}
\mathcal Q_{k}&:= 1+ \sum_{i=1}^{k} \mathcal R_i \left( s_i -s_{i-1} \right), \quad \text{where} \ \ s_i=\frac{1}{\mathcal I_i}, \ s_0=0  \\
&= 1+ \sum_{i=1}^{k - 1} s_i \left( \mathcal R_i -\mathcal R_{i+1} \right) + s_{k}\mathcal R_{k} \\
&=\mathcal Q_{k-1}+ (s_{k}-s_{k -1})\mathcal R_{k}, \quad \text{where} \ \ \mathcal Q_0=1.  
\end{align*} 
 
 For system (\ref{ode4}) with condition (\ref{conditions}), there are a multitude of non-negative equilibria ($>2^n$), but we only need to consider $2n+1$ of these equilibria for our global analysis.   First, define the infection-free equilibrium $\mathcal E_0  = (x^*,y^*,z^*)$, where $x^*=1$ and $y^*_i=z^*_i=0$ for $i=1,\dots,n$.  Then, for each $k\in\left\{1,\dots,n\right\}$, define the following equilibria:
 \begin{align}
 \mathcal E_k   = (x^*,y^*,z^*),  \qquad  & x^*=\frac{1}{\mathcal R_k}, \ y^*_i=s_i-s_{i-1}, \ z^*_i=\frac{\mathcal R_i-\mathcal R_{i+1}}{\mathcal R_k} \ \  \ \text{for} \ \ 1\leq i<k, \label{equilib1} \\  & y^*_k=1-\frac{ \mathcal Q_{k-1}}{\mathcal R_k}, \ z^*_k=0, \quad y^*_i=z^*_i=0 \  \ \ \text{for} \ \ k<i\leq n \notag \\  \notag \\
 \bar{\mathcal E}_k = (x^*,y^*,z^*),  \qquad &  x^*=\frac{1}{\mathcal Q_k}, \ y^*_i=s_i-s_{i-1}, \ z^*_i=\frac{\mathcal R_i-\mathcal R_{i+1}}{\mathcal Q_k} \ \  \ \text{for} \ \ 1\leq i<k,  \label{equilib2}  \\ &  y^*_k=s_k-s_{k-1}, \ z^*_k=\frac{\mathcal R_k}{\mathcal Q_k}-1, \quad  y^*_i=z^*_i=0 \  \ \ \text{for} \ \ k<i\leq n  \notag
 \end{align}
 Note that in the single virus (no immune) equilibrium $\mathcal E_1$, $y_1^*=1-\frac{1}{\mathcal R_1}$, $z_1^*=0$, and $y^*_i=z^*_i=0$ for all $1<i\leq n$.  Also notice that the equilibrium $\mathcal E_k$ contains $k$ positive and $n-k$ zero virus components, along with $k-1$ positive and $n-k+1$ zero immune response components.  The equilibrium $\bar{\mathcal E}_k$ contains $k$ positive and $n-k$ zero components for both virus and immune response.  
 
 In what follows, we will be interested the global behavior of solutions to system (\ref{ode4}).  In doing so, we will determine which viral strains and immune responses uniformly persist \cite{thieme1993persistence} and which go extinct.  Define a solution component, for example $y_i$, to be \emph{uniformly persistent} if 
 $$ \exists \ \epsilon>0 \ \text{(independent of positive initial conditions) such that} \  \liminf_{t\rightarrow\infty} y_i(t) >\epsilon. $$  We now state the main theorem characterizing the dynamics of the model.  
 \begin{theorem}\label{mainThm}
 Consider the model with perfectly nested network (\ref{ode4}) under the assumption of decreasing reproduction numbers (\ref{conditions}) and suppose positive initial conditions, i.e. $x(0),y_i(0),z_i(0)>0$ for all $i \in [1,n]$.  Then the stability of equilibria (\ref{equilib1}) and (\ref{equilib2}), and solution dynamics are characterized as follows:
  \begin{enumerate}
 \item If $\mathcal R_1 \leq 1$, then the infection-free equilibrium $\mathcal E_0$ is globally asymptotically stable.
 \item If $\mathcal R_1 > 1$ and $\mathcal R_1 \leq \mathcal Q_1$, then $\mathcal E_1$ is globally asymptotically stable.
 \item If $\mathcal R_1 > \mathcal Q_1$, let $k$ be the largest integer in $[1,n]$ such that $\mathcal R_k > \mathcal Q_k$.  
     \begin{itemize}
 \item[(a)]  If $k=n$ or $\mathcal R_{k+1} \leq \mathcal Q_{k}$, then $\bar{\mathcal E}_{k}$ is locally stable and the components $y_i,z_i$, $1\leq i\leq k$ are uniformly persistent.  Also, $\lim_{t\rightarrow\infty} x(t)=x^*$, and $\lim_{t\rightarrow\infty} y_i(t)=\lim_{t\rightarrow\infty} z_{i}(t)=0$ for $i\geq k+1$.  In addition, for the case $k=1,2$, $\bar{\mathcal E}_k$ is globally asymptotically stable.
 \item[(b)] If $k<n$ and $\mathcal R_{k+1} > \mathcal Q_{k} $, then $\mathcal E_{k+1}$ is locally stable and the components $y_i,z_i$, $1\leq i\leq k$ are uniformly persistent, and $\lim_{t\rightarrow\infty} y_{k+1}(t)=y_{k+1}^*$.  Also, $\lim_{t\rightarrow\infty} x(t)=x^*$, and $\lim_{t\rightarrow\infty} y_i(t)=0$ for $i\geq k+2$ and $\lim_{t\rightarrow\infty} z_{i}(t)=0$ for $i\geq k+1$.  In addition, for the case $k=1,2$, $\mathcal E_{k+1}$ is globally asymptotically stable.
 \end{itemize}
 Additionally, in each case, the long term averages of each component converge to the appropriate equilibrium value:  $$ \lim_{t\rightarrow\infty} \frac{1}{t} \int\limits_0^t y_i(t) \, dt = y_i^*, \quad \lim_{t\rightarrow\infty} \frac{1}{t} \int\limits_0^t z_i(t) \, dt = z_i^* . $$
 Moreover, in either case 3(a) or 3(b), the global attractor consists of positive ( and uniformly persistent, bounded) entire trajectories of System (\ref{ode4}) satisfying the following equations:
 \begin{align}
\dot y_i &= \gamma y_i\left(\sum_{j= i}^k \left( z_j^*-z_j \right)\right), \quad i=1,\dots,k \notag \\
  \dot z_i &= \frac{\sigma_i}{s_i} z_i\left(\sum_{j\leq i} \left( y_j - y_j^*\right) \right)  \label{invariantODE}  \\
  &\sum_{i=1}^k \mathcal R_i y_i =  \sum_{i=1}^k \mathcal R_i y_i^* \notag
  \end{align}
 \end{enumerate}
  \end{theorem}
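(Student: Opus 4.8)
\emph{Proof strategy.} The plan is to treat all cases through a single Volterra--Lyapunov function centered at whichever equilibrium the statement predicts as the limit --- call it $\mathcal E^{\star}=(x^{\star},y^{\star},z^{\star})$, so that $\mathcal E^{\star}$ is $\mathcal E_0$, $\mathcal E_1$, $\bar{\mathcal E}_k$, or $\mathcal E_{k+1}$ in cases $1$, $2$, $3(a)$, $3(b)$ respectively --- and then to read off persistence, convergence, and the structure of the attractor from LaSalle's invariance principle. With $g(u)=u-1-\ln u\ge 0$, take
\[
V=x^{\star}g\!\Big(\tfrac{x}{x^{\star}}\Big)+\tfrac1\gamma\!\!\sum_{i:\,y_i^{\star}>0}\!\!y_i^{\star}g\!\Big(\tfrac{y_i}{y_i^{\star}}\Big)+\tfrac1\gamma\!\!\sum_{i:\,y_i^{\star}=0}\!\!y_i+\sum_{i:\,z_i^{\star}>0}\!\!\tfrac{s_i}{\sigma_i}z_i^{\star}g\!\Big(\tfrac{z_i}{z_i^{\star}}\Big)+\sum_{i:\,z_i^{\star}=0}\!\!\tfrac{s_i}{\sigma_i}z_i,
\]
which is nonnegative and vanishes only at $\mathcal E^{\star}$. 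The weights $1/\gamma$ and $s_i/\sigma_i$ are forced by the need to absorb the kinetic constants of (\ref{ode4}); taking logarithmic terms for the coordinates that are positive at $\mathcal E^{\star}$ and linear terms for the ones that vanish there is what keeps $V$ nonnegative while --- through the blow-up of $-y_i^{\star}\ln y_i$ and $-z_i^{\star}\ln z_i$ --- making $V$ useful for persistence.

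Differentiating $V$ along (\ref{ode4}) and eliminating the constants with the equilibrium identities $1=x^{\star}+x^{\star}\sum_i\mathcal R_iy_i^{\star}$, $\ \mathcal R_ix^{\star}-1=\sum_{j\ge i}z_j^{\star}$ (when $y_i^{\star}>0$) and $\ \sum_{j\le i}y_j^{\star}=s_i$ (when $z_i^{\star}>0$, using $y_j^{\star}=s_j-s_{j-1}$), one writes $\dot V$ as a negative term $-(x-x^{\star})^2/(x^{\star}x)$ plus a collection of bilinear terms $(y_i-y_i^{\star})z_j$ and $y_i(z_j-z_j^{\star})$. Here the perfectly nested structure is decisive: the inner sums $\sum_{j\ge i}$ in $\dot y_i$ and $\sum_{j\le i}$ in $\dot z_i$ are transposes of each other, so after interchanging orders of summation every bilinear term involving only the persistent indices telescopes to zero, while the bilinear terms coupling persistent to non-persistent indices cancel in pairs, leaving
\[
\dot V=-\frac{(x-x^{\star})^2}{x^{\star}x}+\!\!\sum_{i:\,y_i^{\star}=0}\!\!\big(\mathcal R_ix^{\star}-1\big)y_i+\!\!\sum_{i:\,z_i^{\star}=0}\!\!\Big(\,\textstyle\sum_{j:\,y_j^{\star}>0}y_j^{\star}-s_i\Big)z_i .
\]
This cancellation bookkeeping is, I expect, the principal technical obstacle; once it is done, $\dot V\le 0$ follows at once, because the monotonicity (\ref{conditions}), the defining maximality of $k$, and the case hypothesis ($\mathcal R_{k+1}\le\mathcal Q_k$ in $3(a)$, $\mathcal R_1\le\mathcal Q_1$ in case $2$, and so on) together render nonpositive both the residual viral coefficients $\mathcal R_ix^{\star}-1$ and the residual immune coefficients $\sum_{j:\,y_j^{\star}>0}y_j^{\star}-s_i$.

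With $\dot V\le 0$ in hand I would argue as follows. \emph{Uniform persistence.} Along any solution with positive data the coordinates $y_i,z_i$ $(1\le i\le k)$ stay positive, so $V$ is finite, nonincreasing, hence bounded, and the logarithmic terms then preclude $y_i(t)$ or $z_i(t)$ from approaching $0$; uniformity of these bounds follows from dissipativity (Proposition \ref{bounded}) together with the standard acyclicity argument for the boundary semiflow, since $\mathcal E_0,\mathcal E_1,\bar{\mathcal E}_1,\mathcal E_2,\dots$ form an acyclic chain each member of which is a uniform weak repeller precisely under the stated reproduction-number and $\mathcal Q_k$ inequalities. \emph{Convergence and the attractor.} LaSalle's principle sends every solution into the largest invariant set $M\subseteq\{\dot V=0\}$; since $\{\dot V=0\}$ forces $x=x^{\star}$ and (away from the nongeneric equalities $\mathcal R_{k+1}=\mathcal Q_k$) $y_i=z_i=0$ for $i>k$, substituting back into (\ref{ode4}) yields the stated limits for $x$ and for the high-index coordinates (including $y_{k+1}\to y_{k+1}^{\star}$ in case $3(b)$) and shows that in the uniformly persistent interior $M$ consists exactly of the bounded entire trajectories of the conservative system (\ref{invariantODE}) --- the asserted description of the global attractor. \emph{Long-term averages.} Integrating $\tfrac{d}{dt}\ln y_i$ and $\tfrac{d}{dt}\ln z_i$ over $[0,t]$, dividing by $t$, and letting $t\to\infty$ (the boundary terms vanish by persistence and dissipativity) leaves a triangular linear system whose unique solution is $(y_i^{\star},z_i^{\star})$. \emph{Global stability when $k\le 2$.} When $k=1$ the conservation law in (\ref{invariantODE}) pins $y_1\equiv y_1^{\star}$ and hence $z_1\equiv z_1^{\star}$, so $M=\{\mathcal E^{\star}\}$ and, with the Lyapunov stability already in hand, $\mathcal E^{\star}$ is globally asymptotically stable; when $k=2$ the conservation law together with the auxiliary invariant $z_1^{B}z_2^{A}$ produced by $\dot z_1,\dot z_2$ reduces the flow on $M$ to a planar one, on which a dedicated argument collapses $M$ to $\mathcal E^{\star}$. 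For $k\ge 3$ no such reduction is available --- the reduced dynamics (\ref{invariantODE}) can carry antigenic oscillations --- which is why only local stability together with the structure of $M$ is claimed in general.
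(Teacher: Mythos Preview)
Your approach is essentially the paper's: the same Volterra--Lyapunov function, the same LaSalle reduction, and the same time-average computation. Your closed form for $\dot V$,
\[
\dot V=-\frac{(x-x^{\star})^2}{x^{\star}x}+\sum_{i:\,y_i^{\star}=0}(\mathcal R_ix^{\star}-1)y_i+\sum_{i:\,z_i^{\star}=0}\Big(\sum_{j:\,y_j^{\star}>0}y_j^{\star}-s_i\Big)z_i,
\]
is correct and in fact cleaner than the paper's derivation, which carries the bilinear bookkeeping through an explicit induction rather than invoking the transpose symmetry you describe. Two points of divergence are worth flagging.

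\emph{Uniform persistence.} The paper does not go through boundary acyclicity. It argues instead that on the invariant set $L$ the time averages of $y_i,z_i$ equal $y_i^{\star},z_i^{\star}$, hence $\limsup y_i\ge y_i^{\star}$ and $\limsup z_i\ge z_i^{\star}$ for every positive solution; this is uniform weak persistence, and the upgrade to uniform strong persistence is a one-line citation to the Smith--Thieme framework (dissipativity from Proposition~\ref{bounded} supplies the compact attractor). Your acyclicity route would work, but it requires checking the repelling conditions at each boundary equilibrium, which is more labor than the averaging shortcut the Lyapunov function already hands you.

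\emph{The case $k=2$.} Your observation that $z_1^{B}z_2$ is conserved on $M$ (a consequence of $\dot{\ln}z_2=\alpha\,\dot{\ln}z_1$ with $\alpha<0$ once $\mathcal R_1y_1+\mathcal R_2y_2$ is fixed) is correct and gives an alternative reduction to a planar $(y_1,z_1)$ flow. But ``a dedicated argument collapses $M$ to $\mathcal E^{\star}$'' is the part that actually needs work: the planar system admits the equilibrium $(y_1^{\star},z_1^{\star})$ as a center-type point, and Poincar\'e--Bendixson by itself does not exclude a surrounding periodic orbit. The paper's device is to pick, on a putative periodic orbit, a time with $y_1=y_1^{\star}$ (hence $y_2=y_2^{\star}$ and $\dot z_i=0$), differentiate the conservation law $\mathcal R_1\dot y_1+\mathcal R_2\dot y_2=0$ once more, and observe that the resulting identity
\[
\mathcal R_1y_1^{\star}\Big(\sum_{j=1}^2(z_j^{\star}-z_j)\Big)^2=-\mathcal R_2y_2^{\star}\big(z_2^{\star}-z_2\big)^2
\]
forces $z_1=z_1^{\star}$, $z_2=z_2^{\star}$. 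Your invariant does not by itself supply this exclusion, so either import the second-derivative step or give another concrete mechanism (e.g.\ a Dulac function for the reduced planar field) to close the $k=2$ case.
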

 
 Theorem \ref{mainThm} describes the dynamics of system (\ref{ode4}) by showing local stability, uniform persistence and constraining the global asymptotic behavior of solutions.  The theorem generalizes uniform persistence results by Korytowski and Smith in \cite{korytowski2015nested} by allowing for mortality in the $y_i$ and $z_i$ populations (bacteria and phage populations in their context) and further characterizing the asymptotic dynamics.  Also, our method of proof is different.  In particular, we find a Lyapunov function which allows for the more complete dynamical analysis similar to the approaches in \cite{korytowski2,wolkowicz1989successful}.  While we are not global stability of the appropriate equilibria, $\bar{\mathcal E}_k$ or $\mathcal E_{k+1}$, when $k>2$ (when $y_1,y_2,z_1,z_2$ are uniformly persistent), the constrained dynamics of the global attractor make it likely that the equilibrium is globally stable.  
 
The dynamics on the global attractor can be further characterized.  Indeed since $\sum_{i=1}^k \mathcal R_i y_i =  \sum_{i=1}^k \mathcal R_i y_i^*$, we find that $0= \frac{1}{\gamma} \mathcal R_i \dot y_i $, and after further reduction:
$$z_k-z_k^*=\frac{1}{\mathcal Q_k -1}\sum_{j=1}^{k-1} \left( z_j^*-z_j \right) \sum_{i=1}^j  \mathcal R_i y_i  $$
Plugging this into the equations for $\dot y_i$ in (\ref{invariantODE}), the system reduces to $2(k-1)$ differential equations:
\begin{align}
\dot y_i &= \gamma y_i\left(\sum_{j= i}^{k-1} \left( z_j^*-z_j \right)\right)-\frac{\gamma y_i}{\sum_{i=1}^k \mathcal R_i y_i^*}\sum_{j=1}^{k-1} \left( z_j^*-z_j \right) \sum_{\ell=1}^j  \mathcal R_{\ell} y_{\ell} , \quad i=1,\dots,k-1 \label{Reduced_Sys}  \\
  \dot z_i &= \frac{\sigma_i}{s_i} z_i\left(\sum_{j\leq i} \left( y_j - y_j^*\right) \right),  \quad i=1,\dots,k-1 \notag
\end{align}
The global attractor has solutions which satisfy the above equations and the following constraints:
\begin{align*}
\sum_{i=1}^{k-1} \mathcal R_iy_i, \sum_{i=1}^{k-1} \mathcal R_iy_i < \mathcal Q_k -1, \epsilon< y_i(t),z_i(t)< M \ \ \forall t\in\mathbb R, 1\leq i\leq k-1,
\end{align*}  
where $\epsilon, M >0$ are uniform bounds.  The reduction of the invariant system to (\ref{Reduced_Sys}) allows us to prove global stability in the case $k=2$ with a similar approach to Lemma A.1. in \cite{wolkowicz1989successful}, where the analogous result for one-to-one interaction networks is established.   It remains an open question whether the global attractor can be proved to be the appropriate equilibrium for $k>2$.

\section{Application to Within-Host HIV/CTL Evolution} \label{SecApp}

Several studies have shown that the rate of HIV viral escape from CTL responses slows down after acute infection \cite{asquith2006inefficient,Vitaly1}.   Utilizing a multi-epitope HIV-CTL mathematical model, van Deutekom et al. \cite{vanDeutekom} argue that the contribution to killing of each individual CTL clone and escape rate is inversely associated with the breadth of the response, leading to declining escape rate after initial mutations.  They analyze the equilibrium density under the simplifying assumptions of identical viral strains and identical immune responses.  In contrast with the mass-action immune activation rate in our model (\ref{ode4}), their model has a saturating term for the activation rate which allows for coexistence of immune responses even in the case of a single virus strain.  Here in our model, we can obtain similar conclusions about diminishing escape rates, with the stronger analytical results.  In particular, as shown in the next paragraph, both the escape rate and equilibrium density of the invading strain decrease as immune response breadth increases.    

The escape rate from a CTL response, i.e. the rate of increase for a mutant virus resistant to that CTL response, can be calculated for Model (\ref{ode4}).  Consider the case where there are $k$ viral strains, $y_i$, and $k$ immune responses, $z_i$, ordered in decreasing fitness and dominance, respectively, as given in Conditions (\ref{conditions}).  Here, the perfectly nested structure implies $y_i$ is resistant to the immune responses $z_j$ where $j\leq i$.  If a mutant strain, $y_{k+1}$, arises which is resistant to $z_k$ (and all other immune responses $z_1,\dots,z_{k-1}$), then it can invade if $\mathcal R_{k+1}>\mathcal Q_{k}$, and converge to the component $y_{k+1}^*$ of equilibrium $\mathcal E_{k+1}$ by Theorem \ref{mainThm}.  The rate of invasion (assuming the system is at equilibrium $\bar{\mathcal E}_k$) is given by the  eigenvalue $\lambda_k$ of the linearized $\dot{y}_{k+1}$ equation:
\begin{align*}
\dot{y}_{k+1}=\gamma y_{k+1}\left( \frac{\mathcal R_{k+1}}{\mathcal Q_{k}} -1 \right)
\end{align*}
This invasion rate depends both upon the relative fitness of strain $y_{k+1}$ and the total selection pressure being applied by the immune response which depends on the breadth.  Indeed, 
\begin{align*}
\lambda_{k}=\gamma\left( \frac{\mathcal R_{k+1}}{\mathcal Q_{k}} -1 \right) < \gamma\left( \frac{\mathcal R_{k}}{\mathcal Q_{k-1}} -1 \right) =\lambda_{k-1},
\end{align*}
since $\mathcal R_{k+1}<\mathcal R_k$ and $\mathcal Q_{k+1}>\mathcal Q_k$.  Thus the escape rate from immune response $z_k$, denoted by $\lambda_k$, is less than $\lambda_{k-1}$.  In other words, the escape rate $\lambda_k$ is decreasing with respect to the current immune response breadth, $k$.  Similarly, the equilibrium density $y^*_{k+1}$ in $\mathcal E_{k+1}$ is less than $y^*_k$ in $\mathcal E_k$.   Additionally, since $\mathcal R_k$ decreases and $\mathcal Q_k$ increases with breadth $k$, Theorem \ref{mainThm} implies exclusion of $y_{k+1}$ is more likely as the breadth increases.  Therefore, the nested structure, where multi-epitope resistance is built up through sequential mutations, is consistent with observations of a relatively small number of escapes and diminishing escape rate.  

The invasion rate and equilibrium density of a subdominant immune response also diminishes with breadth of immune response and viral strain diversity.  Indeed, the invasion rate of immune response $z_{k+1}$ can be shown to be
\begin{align*}
\rho_{k+1}=\sigma_{k+1}\mathcal I_{k+1}\left(1- \frac{\mathcal Q_{k+1}}{\mathcal R_{k+1}}  \right).
\end{align*}
Assuming identical death rates for immune responses, i.e. $\sigma_i=\sigma$, we obtain that $\rho_{k+1}<\rho_k$.  Similarly, equilibrium density $z^*_{k+1}$ decreases and exclusion becomes more likely with increasing breadth.  Thus it may be difficult for a broadly acting (targeting conserved epitope) subdominant immune response to become established.

 \begin{figure}[t!]
\subfigure[][]{\label{fig1a}\includegraphics[width=7.5cm,height=4cm]{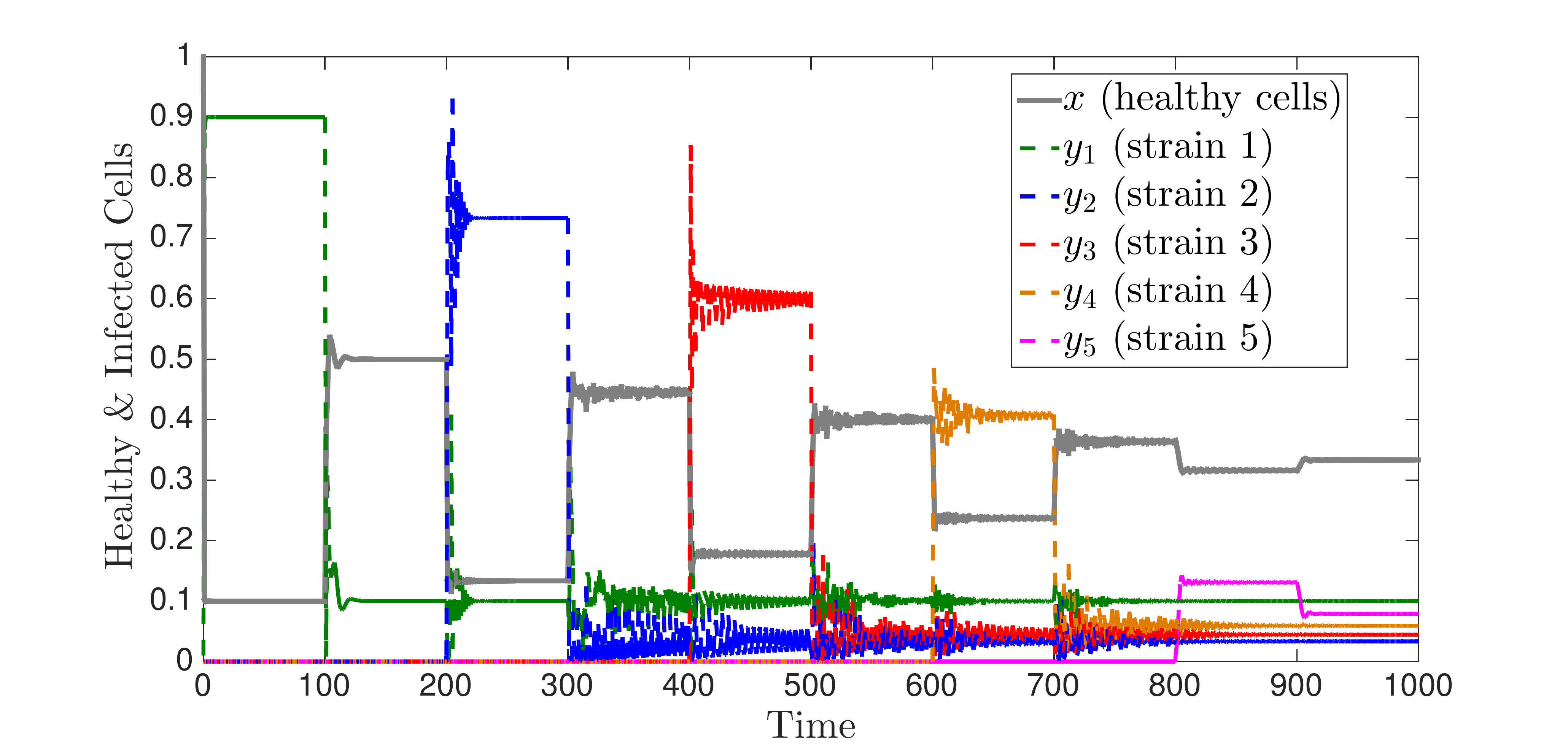}}
\subfigure[][]{\label{fig1b}\includegraphics[width=7.5cm,height=4cm]{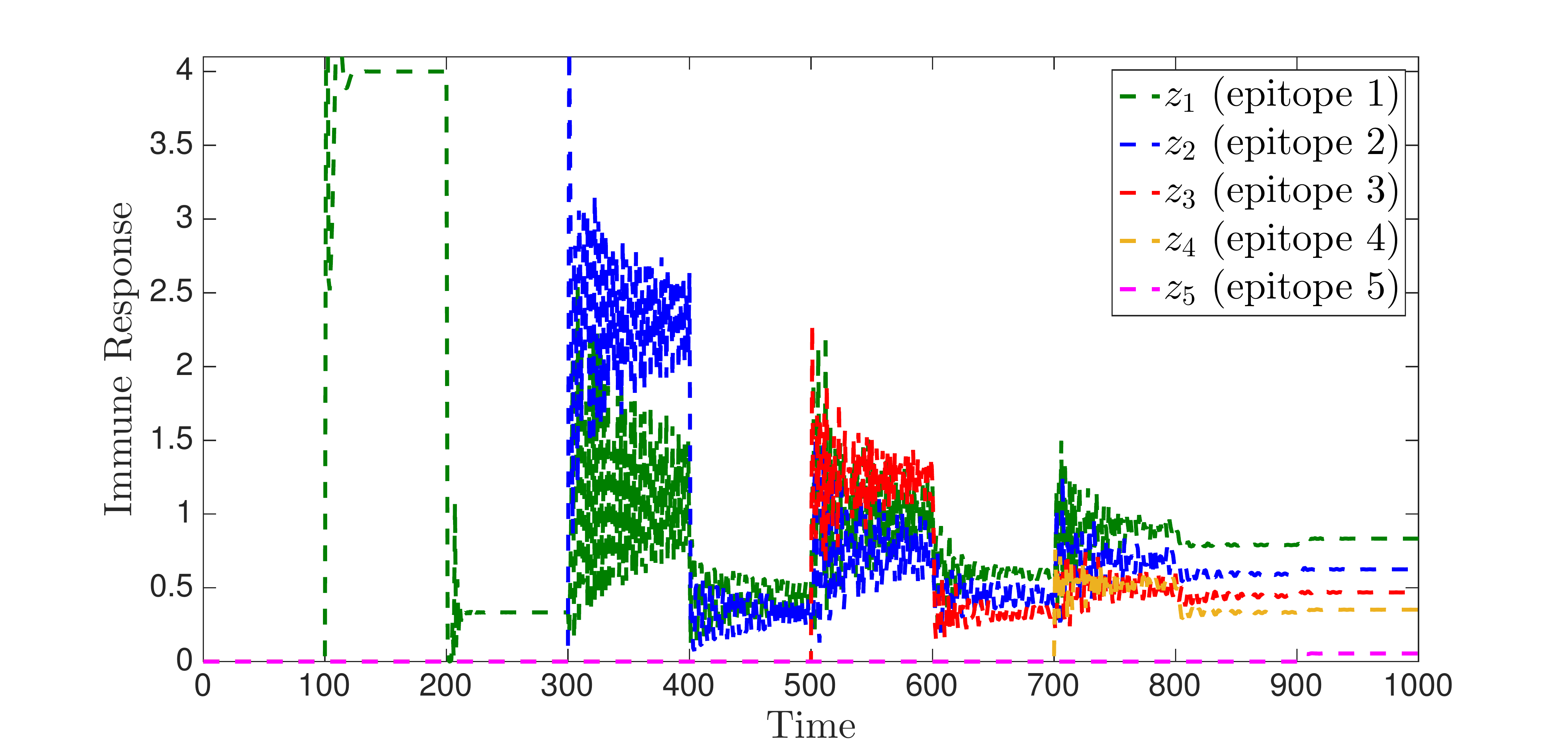}} \\
\subfigure[][]{\label{fig1a}\includegraphics[width=7.5cm,height=4cm]{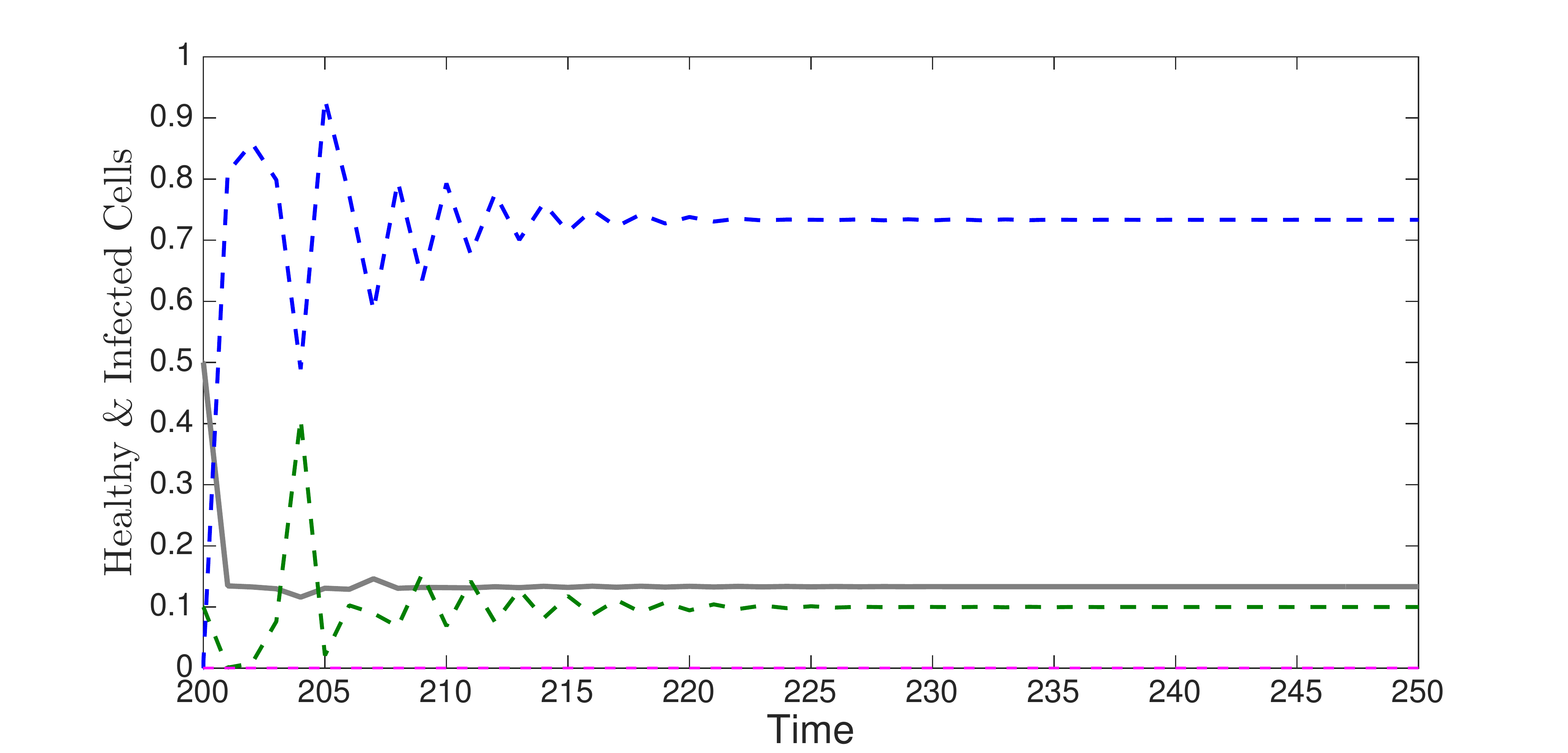}}
\subfigure[][]{\label{fig1b}\includegraphics[width=7.5cm,height=4cm]{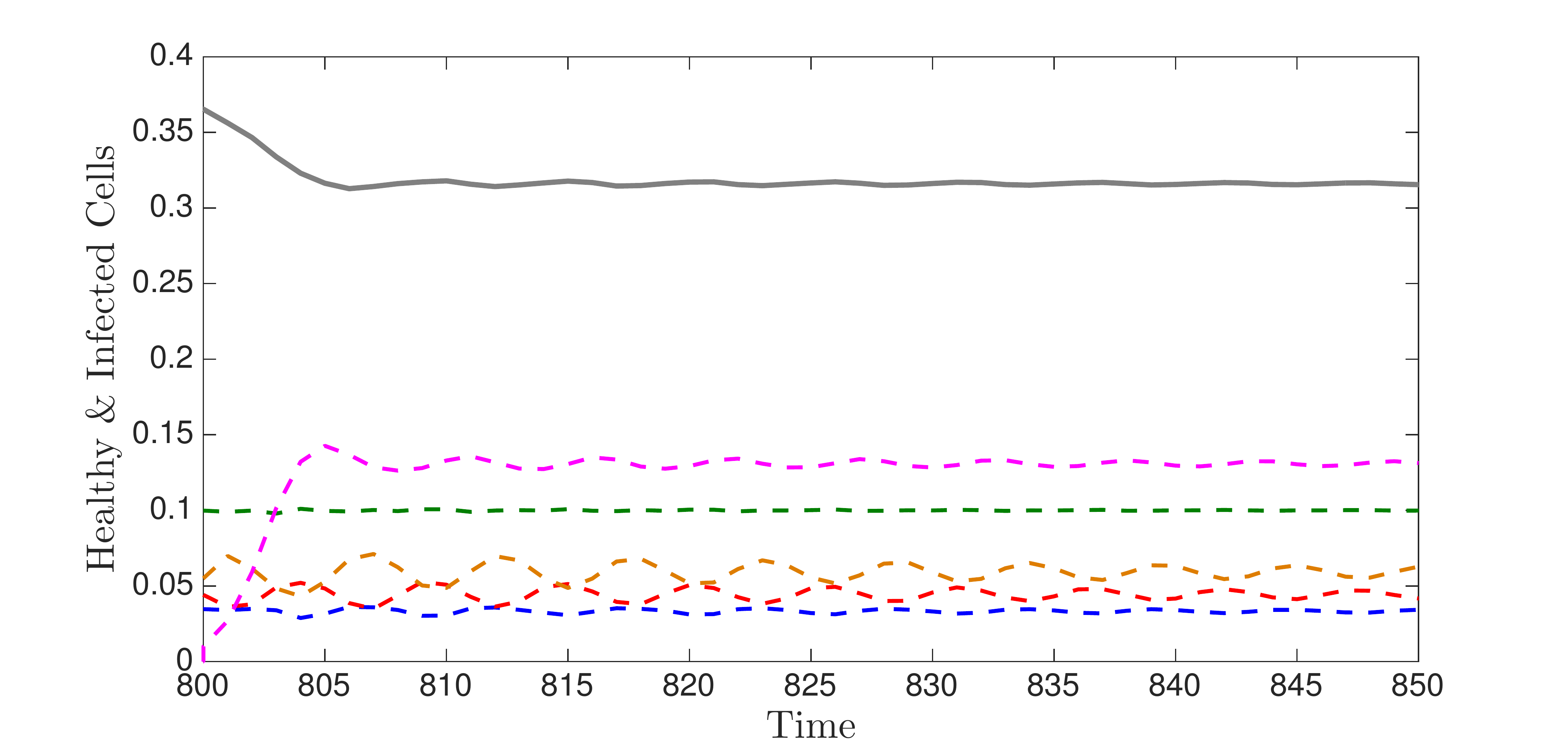}}
\caption{ \emph{Simulation of model (\ref{ode4}) with sequential introduction of mutant virus ``immune escape'' strains and subdominant immune responses}.  Initially at time 0, the wild-type virus (infected cells), $y_1$ is introduced into the healthy cell population.  Then, every 100 days, we alternate introductions of subdominant immune responses and escape virus strains, until $y_1,\dots, y_5, z_1,\dots, z_5$ persist after time 900.  Parameters values are $\mathcal R_1=10$, $\mathcal R_i=0.75 \mathcal R_{i-1}$ for $i=2,\dots,5$, $s_1=0.1$, $s_i=1.33 s_{i-1}$ for $i=2,\dots,5$, $\gamma=2.33$, and $\sigma_i=1.67$.  The healthy and infected cell populations are shown in (a), and immune responses in (b).  Notice the convergence to equilibria and the ``nested pattern'' of equilibrium healthy cells, which agree with analytical results.  In (c) and (d), we zoom in on the invasion by $y_2$ and $y_5$, respectively.   Consistent with linearized analysis and observed HIV/CTL patterns, the rate of invasion slows down, i.e. diminishes with immune and viral variant breadth. }
  \label{Fig1}
  \end{figure}

The total amount of healthy cells, infected cells and immune response at equilibrium density follows a ``nesting pattern'' with increasing breadth.  Let $\mathcal X_k^*$ and $\mathcal Y_k^*=\sum\limits_{i} y_i^*$ be the healthy cells total infected cells at equilibrium $\mathcal E_k$, respectively.  Similarly define $\bar{\mathcal X}_k^*$ and $\bar{\mathcal Y}_k^*$ at equilibrium $\bar{\mathcal E}_k$, along with $\mathcal Z_k^*$ and $\bar{\mathcal Z}_k^*$ as total immune response at equilibrium $\mathcal E_k$ and $\bar{\mathcal E}_k$ respectively.  Then the following relations hold:
\begin{align}
\mathcal X_1^*<\dots <\mathcal X_k^* & < \mathcal X^*_{k+1} < \bar{\mathcal X}^*_{k+1} < \bar{\mathcal X}^*_k< \dots <\bar{\mathcal X}^*_1, \qquad  \mathcal Y_1^*>\dots>\mathcal Y_k^*   > \bar{\mathcal Y}^*_k>\dots>\bar{\mathcal Y}^*_1, \notag \\  & \qquad \qquad 0=\mathcal Z_1^*<\dots <\mathcal Z_k^*  < \mathcal Z^*_{k+1} < \bar{\mathcal Z}^*_{k+1} < \bar{\mathcal Z}^*_k< \dots <\bar{\mathcal Z}^*_1.  \label{totEq}
\end{align}
 Notice that as breadth $k$ increases, total infected cell equilibrium is never as high as when there is wild type $y_1$ and no immune response (at $\mathcal E_1$), and never as low as when the first (immunodominant) immune response $z_1$ kicks in (at $\bar{\mathcal E}_1$).  Suppose the system is at equilibrium $\bar{\mathcal E}_k$, i.e. viral strain and immune response breadth $k$, and the escape mutant $y_{k+1}$ arises.  Convergence to $\mathcal E_{k+1}$ will follow (by Theorem \ref{mainThm}), increasing the total infected cells to $\mathcal Y^*_{k+1}$.  The infected cell count can be reduced to $\bar{\mathcal Y}_{k+1}^*$ by eliciting a subdominant immune response $z_{k+1}$ targeting a conserved epitope on $y_{k+1}$ (and all strains $y_1,\dots,y_k$).  While establishing this immune response $z_{k+1}$ certainly improves the health of the individual from the state at equilibrium $\mathcal E_{k+1}$ by targeting the escape variant $y_{k+1}$, the infected cell count has increased as the viral strain and immune breadth increased from $k$ to $k+1$ ($\bar{\mathcal Y}^*_{k+1} > \bar{\mathcal Y}^*_k$).  We perform numerical simulations of the model under the described scenario where sequential viral escape is followed by introduction of subdominant immune response targeting the escape variant, shown in Figure \ref{Fig1}.  

The above analysis reinforces the importance of strong immune responses directed at conserved epitopes (high fitness cost for resistance) in order to control HIV with CTL response.  If dominant immune responses can be readily escaped and broad responses attacking conserved epitopes are weak, then the system will evolve to have high infected cell load even when the overall immune breadth is large.  This is not to say that the dominant immune response is without benefit, since resistance will come at a fitness cost for the virus, as long as there are not compensatory mutations.  This is true in our model even though it is possible for certain parameters that a dominant response which induces a resistant strain, suppresses a broadly acting subdominant response to zero.  If the dominant response is removed from the system, the subdominant one would be able to persist and would not induce a resistant strain.  However, even in this case, the number of healthy cells is greater in the presence of the dominant strain (we omit analytical and numerical results pertaining to such a case).  Overall, our findings support the notion that an ideal CTL boosting vaccine or therapy for HIV would be able to target conserved epitopes with a strong response, which has been the conclusion of other modeling studies \cite{Althaus,liu2013vertical,vanDeutekom}.

The global analysis and extensive numerical simulations suggest that the perfectly nested structure, along with the assumed trade-offs between fitness and resistance (strain reactivity) for virus (immune response) given by Conditions (\ref{conditions}), produce stable solution dynamics where total populations converge to the ``nested'' steady states (\ref{totEq}).  Thus, system (\ref{ode4}) provides a model in which a diverse collection of viral strains and immune response can evolve and coexist at equilibrium stably with the nested immunodominance and resistance/fitness hierarchy.  The stable diversity of the virus strains (viral ``quasispecies'') and of the immune response (immune breadth) is made possible by the structure of the system.  In the absence of the immune response, the most fit virus competitively excludes all others \cite{browne2,de2008multistrain}, implying a form of predator mediated coexistence \cite{jmb-97}, and conversely if there is just one virus strain, then the dominant immune response will drive all other responses to extinction \cite{browne2016}.  Additionally, the nested network generates the stable coexistence while assuming epitope-specific immune responses with multi-strain reactivity, in contrast to the strain-specific nature of a one-to-one reactivity network.  

It is not clear how our results may be affected by altering key assumptions of the model.  Analysis is significantly more complicated in the case of general interaction networks which do not have the perfectly nested symmetry or one-to-one reactivity.  While analytical work is limited in the more general setting, Jover et al. simulated stable coexistence in their Lotka-Volterra model for infection networks which were not perfectly nested, but noted that the appropriate conditions for coexistence are difficult to find \cite{jover2013mechanisms}.   In a follow-up study to this work, we will address more general interaction networks.  In addition, our model neglects stochasticity which may cause certain persistent viral strains to become extinct. There have been studies of HIV-CTL dynamics which consider all possible mutational pathways from multiple epitopes in stochastic models \cite{Althaus,leviyang2015broad,Vitaly2,vanDeutekom,batorsky2014route}.  In one such study, van Deutekom et al. produced simulations where the persistent viral strains had a nested structure with respect to resistance \cite{vanDeutekom}.  Other work has emphasized the role of sequential epitope escapes, immunodominance and fitness costs in determining viral evolution \cite{da2012dynamics,kessinger2015inferring,liu2013vertical}.  All of this research suggests that nestedness and trade-offs between fitness and resistance/reactivity may evolve in HIV-CTL networks.  In this paper we determined which HIV/CTL populations persist and characterized stability of equilibria in an ideal case of perfectly nested networks.  Further modeling of the complex multi-epitope HIV-CTL dynamics may help to gain insight on immunotherapy and vaccine design, along with contributing to theory in community ecology and evolution.

\section{Proofs of theorems} \label{SecProofs}
\begin{proof}[Proof of Proposition \ref{bounded}:]
First, to establish non-negativity of solutions, note that the set where $Y_i=0$ or $Z_j=0$ is invariant.  In addition, the set where $X>0$ is invariant.  Non-negativity of solutions follows from these properties and the smoothness of the vector field.  Now define the quantity $S=X+\sum_{i=1}^m Y_i + \sum_{j=1}^n Z_j/q_j$.  It is not hard to see that $\frac{dS}{dt}=b-cS$ where $c=\min_{i,j} \left( a,\delta_i,\mu_j \right)$.  Thus $\limsup_{t\rightarrow\infty} S(t) \leq \frac{b}{c}$.  
\end{proof} 

\begin{proof}[Proof of Theorem \ref{mainThm}]
Define the following candidate Lyapunov function:
\begin{align*}
W(x,y,z)&=x-x^*\ln\frac{x}{x^*} + \sum_{i=1}^n \frac{1}{\gamma}\left(y_i-y_i^*\ln\frac{y_i}{y_i^*} \right) + \sum_{i=1}^n \frac{s_i}{\sigma_i}\left(z_i-z_i^*\ln\frac{z_i}{z_i^*} \right) \\
&:=W_1+W_2+W_3,
\end{align*}
where the term with logarithm should be omitted if the corresponding coordinate
is zero.  Then taking the time derivatives, we obtain
\begin{align*}
\dot W_1 &= 1- x -\frac{x^*}{x}+x^* + x^* + \sum_{i=1}^n \mathcal R_i y_i (x^*-x), \\
\dot W_2 &=  \sum_{i=1}^n \left(\mathcal R_i x-1-\sum_{j\geq i} z_j \right) (y_i-y_i^*), \\
\dot W_3 &=  \sum_{i=1}^n \left(\sum_{j\leq i} y_j - s_i \right) (z_i-z_i^*)
\end{align*}

Thus
\begin{align}
\dot W &= 1- x -\frac{x^*}{x}+x^* + \sum_{i=1}^n\left[ \mathcal R_i( y_i x^*-y_i^*x)-y_i+y_i^*\left(1+\sum_{j\geq i} z_j \right) -  z_i^*\sum_{j\leq i} y_j - s_i (z_i-z_i^*) \right]  \label{Wdot}
\end{align}

\textbf{Case 1}
Inserting the equilibrium components corresponding to $\mathcal E_0$ into (\ref{Wdot}), we obtain:
\begin{align*}
\dot W &= 2- x -\frac{1}{x} + \sum_{i=1}^n\left[ \left(\mathcal R_i-1\right)y_i - s_i z_i \right] \\
&\leq  \frac{-1}{x} \left( x-1\right)^2,
\end{align*}
since $\mathcal R_i<1$ for all $i$.   Thus, applying La Salle's Invariance principle, the $\omega-$limit set corresponding to any non-negative solution of (\ref{ode4}) (with $x(0)>0$) is contained in the largest invariant set where $\dot W =0$.  In this case, it is easy to see that $\dot W=0$ if and only if $y_i=z_i=0$ for all $1\leq i\leq n$ and $x=1$, i.e. $(x,y,z)= \mathcal E_0$.  

Next, we consider Case 3(a).  Note that we will consider Case 2 together with Case 3(b).

\textbf{Case 3(a):}
Suppose that $\mathcal R_1>1$ and $\mathcal R_1 \leq \mathcal Q_1$.  Let $k\in[1,n]$ be the largest integer such that $\mathcal R_k > \mathcal Q_k$ and suppose that $k=n$ or $\mathcal R_{k+1} \leq \mathcal Q_{k}$.  We consider the equilibrium $\bar{\mathcal E}_k$ and define the following quantities: $\rho^k_j=\mathcal R_j, \ 1\leq j\leq k, \ \rho^k_{k+1}=\mathcal Q_k$.   In this way the $y, z$ equilibrium components of $\bar{\mathcal E}_k$ can be conveniently written as
\begin{align*}
y_j^*&= s_j-s_{j-1},   \qquad (s_0=0) \\
z_j^*&= \frac{\rho^k_j-\rho^k_{j+1}}{\mathcal Q_k},
\end{align*}
for $j=1,\dots,k$.  Inserting the equilibrium values into (\ref{Wdot}), we find
\begin{align}
\dot W &= 1- x -\frac{1}{\mathcal Q_kx}+\frac{1}{\mathcal Q_k}  \notag \\ 
& \qquad\qquad \qquad+ \sum_{i=1}^k\left[ \mathcal R_i\left( \frac{y_i}{\mathcal Q_k} -(s_i-s_{i-1})x\right)-y_i+s_i-s_{i-1}+s_i \left( \frac{\rho^k_i-\rho^k_{i+1}}{\mathcal Q_k}-z_i\right) \right] \notag \\
& \qquad \qquad \qquad+\sum_{i=1}^k\left[ (s_i-s_{i-1})\sum_{j\geq i} z_j  -  \frac{\rho^k_i-\rho^k_{i+1}}{\mathcal Q_k}\sum_{j\leq i} y_j  \right] + \sum_{i=k+1}^n \left[ y_i\left( \frac{\mathcal R_i}{\mathcal Q_k} -1\right) -s_iz_i \right] \notag \\ \notag \\
&\leq 1- x -\frac{1}{\mathcal Q_kx}+\frac{1}{\mathcal Q_k} -(\mathcal Q_k-1)x+s_k+\frac{\mathcal Q_k -1-s_k\mathcal Q_k}{\mathcal Q_k}+ \sum_{i=1}^k\left[ y_i\left( \frac{\mathcal R_i}{\mathcal Q_k} -1\right) -s_iz_i \right]\notag \\
& \qquad \qquad \qquad+\sum_{i=1}^k\sum_{j\leq i} \left[ (s_j-s_{j-1})z_i  - \left( \frac{\rho^k_i-\rho^k_{i+1}}{\mathcal Q_k} \right)y_j  \right]  \quad (\text{since } \mathcal R_i \leq \mathcal Q_k \ \forall  i \geq k+1) \notag\\ \notag \\
&= \frac{-1}{\mathcal Q_k x}\left(\mathcal Q_k x -1\right)^2 + \sum_{i=1}^k\left[ y_i\left( \frac{\mathcal R_i}{\mathcal Q_k} -1\right)-s_iz_i +z_i \sum_{j\leq i}(s_j-s_{j-1}) + y_i\right] \notag \\
& \qquad \qquad \qquad   -\sum_{i=1}^{k-1}\frac{(\mathcal R_i-\mathcal R_{i+1})}{\mathcal Q_k} \sum_{j\leq i} y_j - \frac{\mathcal R_k}{\mathcal Q_k} \sum_{j\leq k} y_j \notag \\ \notag \\
&= \frac{-1}{\mathcal Q_k x}\left(\mathcal Q_k x -1\right)^2 + \frac{1}{\mathcal Q_k}\sum_{i=1}^{k-1}\left[ \mathcal R_i y_i -(\mathcal R_i-\mathcal R_{i+1})\sum_{j\leq i} y_j \right] +\frac{1}{\mathcal Q_k}\left[ \mathcal R_k y_k -\mathcal R_k\sum_{j\leq k} y_j \right]  \label{induct0}  \\
&= \frac{-1}{\mathcal Q_k x}\left(\mathcal Q_k x -1\right)^2 + \frac{1}{\mathcal Q_k}\sum_{i=1}^{k-1}\left[ -\mathcal R_i \sum_{j\leq i-1} y_j+\mathcal R_{i+1}\sum_{j\leq i} y_j -\mathcal R_k y_i\right] \notag
\end{align}

We claim that the second term (the sum from $i=1$ to $k-1$) is actually zero for $k\geq 1$.  First notice that this is true for $k=1$ by observing that the second and third term in (\ref{induct0}) are zero for the case $k=1$.  Next,
\begin{align}
\sum_{i=1}^{k-1}\left[ -\mathcal R_i \sum_{j\leq i-1} y_j+\mathcal R_{i+1}\sum_{j\leq i} y_j -\mathcal R_k y_i\right]  
&= \sum_{i=1}^{k-1}\left[ (-\mathcal R_i+\mathcal R_{i+1})\sum_{j\leq i-1} y_j +(\mathcal R_{i+1}-\mathcal R_k) y_i \right]  \label{induct1} \\
&= \sum_{i=1}^{k-1}(-\mathcal R_i+\mathcal R_{i+1})\sum_{j\leq i-1} y_j +\sum_{i=1}^{k-2}(\mathcal R_{i+1}-\mathcal R_k) y_i  \notag \\
&= \sum_{i=1}^{k-2}\left[ (-\mathcal R_i+\mathcal R_{i+1})\sum_{j\leq i-1} y_j +(\mathcal R_{i+1}-\mathcal R_{k-1}) y_i \right]  \label{induct2} 
\end{align}

Comparing equations (\ref{induct1}) and (\ref{induct2}), it is not hard to see by induction that the quantity of interest is indeed zero for all $k\geq 1$.  
Thus,
\begin{align*}
\dot W &= \frac{-1}{\mathcal Q_k x}\left(\mathcal Q_k x -1\right)^2+\sum_{i=k+1}^n \left[ y_i\left( \frac{\mathcal R_i}{\mathcal Q_k} -1\right) -s_iz_i \right]\leq \frac{-1}{\mathcal Q_k x}\left(\mathcal Q_k x -1\right)^2\leq 0
\end{align*}
Applying La Salle's Invariance principle, the $\omega-$limit set corresponding to any solution of (\ref{ode4}) originating in $\mathcal Z_k$ is contained in the largest invariant set where $\dot W =0$, denoted by $L$.  In this case $\mathcal R_i < \mathcal Q_k$ for $i\geq k+2$ and $\mathcal R_{k+1}\leq \mathcal Q_k$, which implies $\dot W=0$ iff $x=x^*=\frac{1}{\mathcal Q_k}$, $z_i=0$ for $i\geq k+1$, $y_i=0$ for $i\geq k+1$ or $\mathcal R_{k+1}= \mathcal Q_k$ and $y_i=0$ for $i\geq k+2$.  If $\mathcal R_{k+1}= \mathcal Q_k$, then $\dot y_{k+1}=0$ and $y_{k+1}=\bar y_{k+1}$ where $\bar y_{k+1}$ is constant.

On the invariant set $L$ a solution satisfies the following equations:
\begin{align}
\dot y_i &= \gamma y_i\left(\sum_{j= i}^k \left( z_j^*-z_j \right)\right), \quad i=1,\dots,k \label{inv1} \\
  \dot z_i &= \frac{\sigma_i}{s_i} z_i\left(\sum_{j\leq i} \left( y_j - y_j^*\right) \right) \notag
\end{align}
where $y_i^*$ and $z_i^*$, $i=1,\dots,k$, are the positive components of the equilibrium $\bar{\mathcal E}_k$. In addition,  $z_i=0 \ \text{for} \ k+1 \leq i\leq n$, and $y_i=0 \ \text{for} \ k+2 \leq i\leq n$, and
  \begin{align}
  \sum_{i=1}^k \mathcal R_i y_i &=  \sum_{i=1}^k \mathcal R_i y_i^* - \mathcal R_{k+1}\bar y_{k+1} \label{Ri1} 
  \end{align}
     Since $\dot W\leq 0$ and $W \rightarrow \infty$ as $y_i$, $z_i$ goes to $0$ or $\infty$ for any $i\in [1,k]$, we find that for any solution there exists $p,P>0$ such that $p\leq y_i,z_i\leq P$ for $i=1,\dots k$.   Consider a solution in the invariant set $L$.  For such a solution, by integrating the $\dot y_i$ equations, we obtain the following for $i=1,\dots, k$:
$$\frac{1}{\gamma T}\ln\left(\frac{y_i(T)}{y_i(0)}\right) = \sum_{j= i}^k \left( z_j^*- \frac{1}{T} \int\limits_0^T z_j(t)\,dt\right) $$
Letting $T\rightarrow \infty$, the left-hand side tends to zero.  Then, letting $i=k$, we obtain that $\frac{1}{T} \int\limits_0^T z_k(t)\,dt=z_k^*$.  Successively solving equations $i=k-1,k-2,\dots,1$, we find that $$\lim_{T\rightarrow\infty}\frac{1}{T} \int\limits_0^T z_i(t)\,dt=z_i^*,$$ for all $i=1,\dots,k$.  Similarly, by integrating $\dot z_i$ equations, we obtain $$\lim_{T\rightarrow\infty}\frac{1}{T} \int\limits_0^T y_i(t)\,dt=y_i^*.$$   
  Since the $\omega$-limit set of any solution is contained in $L$, these relations hold for any solution originating in $\mathcal Z_k$.  This implies that for any solution:
 $$\limsup_{t\rightarrow \infty}y_i(t) \geq y_i^* \quad \text{and} \quad \limsup_{t\rightarrow \infty}z_i(t) \geq z_i^*, \quad i=1,\dots, k $$
Therefore, $y_i$ and $z_i$, $1\leq i\leq k$, are uniformly weakly persistent.  Theorem \ref{bounded} implies that the key hypotheses of Corollary 4.8 from \cite{smith2011dynamical}
are satisfied, and thus weak uniform persistence implies strong
uniform persistence.  Notice also that by taking the asymptotic average of (\ref{Ri1}) we obtain that $y_{k+1}=\bar y_{k+1}=0$.  Thus the dynamics of the global attractor satisfy (\ref{invariantODE}), along with satisfying $\epsilon< y_i(t),z_i(t)< M \ \ \forall t\in\mathbb R, 1\leq i\leq k$, where $\epsilon, M >0$ are uniform bounds.

For the case $k=1$, we can prove global stability of $\bar{\mathcal E}_1$.  In this case, on the invariant set $L$ we have $y_2=\dots=y_n=0$ and $z_2=\dots=z_n=0$.  This implies that $y_1=y_1^*$.  Thus $\dot y_1=0$, which implies $z_1=z_1^*$.  Therefore in this case $L$ consists solely of the equilibrium $\bar{\mathcal E}_1$.

For the case $k=2$, we further reduce the system by (\ref{Reduced_Sys}), so that $(y_1,z_1)$ satisfies the planar system:
\begin{align}
\dot y_1 &= \gamma y_1 \left( z_1^*-z_1\right)\left(1-\frac{\mathcal R_1 y_1}{\sum_{i=1}^2 \mathcal R_i y_i^*}\right),  \label{reODEn}  \\
  \dot z_1 &= \frac{\sigma_1}{s_1} z_1 \left( y_1 - y_1^*\right) \notag
\end{align}
Note that $y_i,z_i \ i=1,2$ are uniformly persistent, $y_i,z_i\rightarrow 0$ as $t\rightarrow\infty$ for $i>2$, and  $\bar{\mathcal E}_1$ is the unique equilibrium with $y_i,z_i>0 \Leftrightarrow i=1,2$.  This rules out the case that omega limit sets consist of a heteroclinic orbit.  Therefore by the Poincar\'e-Bendixson criteria and the direction of the vector field, if the omega-limit set orbit $(y_1,z_1)$ is not an equilibrium, then it is a periodic orbit around the equilibrium $(y_1^*,z_1^*)$.  Without loss of generality assume that $y_1(0)=y_1^*$.  By (\ref{invariantODE}), $y_2(0)=y_2^*$ and $\dot z_i(0)=0$.  Furthermore 
\begin{align*}
\mathcal R_1 \ddot y_1 &= -\mathcal R_2 \ddot y_2 \Rightarrow \gamma^2 y_1^*\left(\sum_{j= i}^2 \left( z_j^*-z_j(0) \right)\right)^2 =-\gamma^2 y_2^*\left( z_2^*-z_2(0)\right)^2
\end{align*} 
The only way the above equation can be satisfied is if $z_i(0)=z_i^*$.  Then $y_i(0)=y_i^*, \ z_i(0)=z_i^*$.  Thus $\bar{\mathcal E}_2$ is globally asymptotically stable.

\textbf{Case 2 and 3(b):}
Suppose that $\mathcal R_1>1$.  If $\mathcal R_1 > \mathcal Q_1$, then let $k=0$.  If $\mathcal R_1 \leq \mathcal Q_1$.   Let $k\in[1,n]$ be the largest integer such that $\mathcal R_k > \mathcal Q_k$.  In this case, suppose that $0 \leq k<n$ and $\mathcal R_{k+1} > \mathcal Q_{k} $ (if $k=0$ then recall $\mathcal Q_0:=1$), and consider equilibrium $\mathcal E_{k+1}$.   Inserting the equilibrium components into (\ref{Wdot}):
\begin{align*}
\dot W &= 1- x -\frac{1}{x\mathcal R_{k+1}}+\frac{1}{\mathcal R_{k+1}} \\
& \qquad \qquad + \sum_{i=1}^k\left[ \mathcal R_i\left( \frac{y_i}{\mathcal R_{k+1}} -(s_i-s_{i-1})x\right)-y_i+s_i-s_{i-1}+s_i \left( \frac{\mathcal R_i-\mathcal R_{i+1}}{\mathcal R_{k+1}}-z_i\right) \right] \\
& \qquad \qquad +\sum_{i=1}^k\left[ (s_i-s_{i-1})\sum_{j\geq i} z_j  -  \frac{\mathcal R_i-\mathcal R_{i+1}}{\mathcal R_{k+1}}\sum_{j\leq i} y_j  \right] +\mathcal R_{k+1} \left[ \frac{y_{k+1}}{\mathcal R_{k+1}} - \left(1-\frac{\mathcal Q_k}{\mathcal R_{k+1}} \right)x\right] \\ & \qquad \qquad   - y_{k+1} +\left(1-\frac{\mathcal Q_k}{\mathcal R_{k+1}} \right) \left(1+\sum_{j\geq k+1} z_j \right) -s_{k+1}z_{k+1}+ \sum_{i=k+2}^n \left[ y_i\left( \frac{\mathcal R_i}{\mathcal R_{k+1}} -1\right) -s_iz_i \right] \\ \\
&= 1- x -\frac{1}{x\mathcal R_{k+1}}+\frac{1}{\mathcal R_{k+1}} -(\mathcal Q_k-1)x+s_k+\frac{\mathcal Q_k -1-s_k\mathcal R_{k+1}}{\mathcal R_{k+1}} - (\mathcal R_{k+1} - \mathcal Q_k)x   \\
& \qquad \qquad  + \sum_{i=1}^k\left[ y_i\left( \frac{\mathcal R_i}{\mathcal R_{k+1}} -1\right) -s_iz_i \right]  +\sum_{i=1}^k\sum_{j\leq i} \left[ (s_j-s_{j-1})z_i  - \left( \frac{\mathcal R_i-\mathcal R_{i+1}}{\mathcal R_{k+1}} \right)y_j  \right] \\ & \qquad \qquad + 1-\frac{\mathcal Q_k}{\mathcal R_{k+1}}+ \sum_{i=k+1}^n \frac{z_i}{\mathcal R_{k+1}} \left( \mathcal R_{k+1}s_k+\mathcal R_{k+1} - \mathcal Q_k-\mathcal R_{k+1}s_i \right)+ \sum_{i=k+2}^n y_i\left( \frac{\mathcal R_i}{\mathcal R_{k+1}} -1\right)   
\end{align*}
Notice that $\mathcal R_i< \mathcal R_{k+1}$ for $i\geq k+2$.  Also, for $i\geq k+2$:
\begin{align*}
\mathcal R_{k+1} - \mathcal Q_k-\mathcal R_{k+1}s_i &= \mathcal R_{k+1} - \mathcal Q_{k+1}  - \mathcal R_{k+1}\left(s_i-s_{k+1}  \right)  \leq  0,
\end{align*}
since $s_i\geq s_{k+1}$ for $i\geq k+1$ and $k$ is chosen so that $\mathcal R_{k+1}\leq Q_{k+1}$.  Thus
\begin{align}
\dot W 
& \leq 1- x -\frac{1}{x\mathcal R_{k+1}}+\frac{1}{\mathcal R_{k+1}} -(\mathcal Q_k-1)x+\frac{\mathcal Q_k -1}{\mathcal R_{k+1}}+ \sum_{i=1}^k\left[ y_i\left( \frac{\mathcal R_i}{\mathcal R_{k+1}} -1\right) -s_iz_i \right]  \notag \\
& \qquad \qquad  +\sum_{i=1}^k\sum_{j\leq i} \left[ (s_j-s_{j-1})z_i  - \left( \frac{\mathcal R_i-\mathcal R_{i+1}}{\mathcal R_{k+1}} \right)y_j  \right]  - (\mathcal R_{k+1} - \mathcal Q_k)x + 1-\frac{\mathcal Q_k}{\mathcal R_{k+1}} \notag \\ \notag \\
&= \frac{-1}{\mathcal R_{k+1} x}\left(\mathcal R_{k+1} x -1\right)^2 \notag \\
& \qquad \qquad+ \sum_{i=1}^k\left[ y_i\left( \frac{\mathcal R_i}{\mathcal R_{k+1}} -1\right) -s_iz_i  + \sum_{j\leq i} \left( (s_j-s_{j-1})z_i  - \left( \frac{\mathcal R_i-\mathcal R_{i+1}}{\mathcal R_{k+1}} \right)y_j \right) \right]  \notag \\ \notag \\
&= \frac{-1}{\mathcal R_{k+1} x}\left(\mathcal R_{k+1} x -1\right)^2 + \sum_{i=1}^k\left[ -y_i+ \frac{\mathcal R_{i+1}}{\mathcal R_{k+1}} y_i - \sum_{j\leq i-1} \left( \frac{\mathcal R_i-\mathcal R_{i+1}}{\mathcal R_{k+1}} \right)y_j \right]   \label{inductb0}
\end{align}
Note that the second term in (\ref{inductb0}) is zero for $k=0$ and $k=1$.  We claim that the second term (the sum from $i=1$ to $k$) is actually zero for $2\leq k\leq n-1$.  Indeed
\begin{align}
\sum_{i=1}^k & \left[ \left( \frac{\mathcal R_{i+1}}{\mathcal R_{k+1}}-1\right) y_i - \sum_{j\leq i-1} \left( \frac{\mathcal R_i-\mathcal R_{i+1}}{\mathcal R_{k+1}} \right)y_j \right] \notag \\
& \qquad \qquad= \sum_{i=1}^{k-1}\left( \frac{\mathcal R_{i+1}-\mathcal R_{k+1}}{\mathcal R_{k+1}} \right)y_i - \sum_{i=1}^{k} \left( \frac{\mathcal R_i-\mathcal R_{i+1}}{\mathcal R_{k+1}} \right)\sum_{j\leq i-1}y_j   \label{inductb1} \\
&\qquad \qquad =\sum_{i=1}^{k-1}\left[ \frac{\mathcal R_{i+1}-\mathcal R_{k+1}}{\mathcal R_{k+1}}   -\left( \frac{\mathcal R_k-\mathcal R_{k+1}}{\mathcal R_{k+1}} \right)\right]y_i - \sum_{i=1}^{k-1} \left( \frac{\mathcal R_i-\mathcal R_{i+1}}{\mathcal R_{k+1}} \right)\sum_{j\leq i-1}y_j  \notag \\
& \qquad \qquad=\sum_{i=1}^{k-2} \left( \frac{\mathcal R_{i+1}-\mathcal R_k}{\mathcal R_{k+1}} \right)y_i - \sum_{i=1}^{k-1} \left( \frac{\mathcal R_i-\mathcal R_{i+1}}{\mathcal R_{k+1}} \right)\sum_{j\leq i-1}y_j   \label{inductb2}
\end{align}
Comparing equations (\ref{inductb1}) and (\ref{inductb2}), it is not hard to see by induction that the quantity of interest is indeed zero for all $n-1\geq k\geq 2$.  Thus,
\begin{align*}
\dot W &=\frac{-1}{\mathcal R_{k+1} x}\left(\mathcal R_{k+1} x -1\right)^2+ \sum_{i=k+1}^n \frac{z_i}{\mathcal R_{k+1}} \left( \mathcal R_{k+1} - \mathcal Q_k-\mathcal R_{k+1}(s_i-s_{k+1}) \right)+ \sum_{i=k+2}^n y_i\left( \frac{\mathcal R_i}{\mathcal R_{k+1}} -1\right) \\
& \leq  \frac{-1}{\mathcal R_{k+1} x}\left(\mathcal R_{k+1} x -1\right)^2\leq 0, \qquad  \text{for} \ \ k=0,1,\dots, n-1.
\end{align*}
In this case, $\dot W=0$ iff $x=x^*=\frac{1}{\mathcal R_{k+1} }$, $y_i=0$ for $i\geq k+2$ and $z_i=0$ for $i\geq k+1$.  This implies that $\dot y_{k+1}=0$ and, thus $y_{k+1}=\bar y_{k+1}$ where $\bar y_{k+1}$ is constant.

On the invariant set $L$ a solution satisfies the following equations:
\begin{align*}
\dot y_i &= \gamma y_i\left(\sum_{j= i}^k \left( z_j^*-z_j \right)\right), \quad i=1,\dots,k \label{inv2} \tag{18}\\
  \dot z_i &= \frac{\sigma_i}{s_i} z_i\left(\sum_{j\leq i} \left( y_j - y_j^*\right) \right)
\end{align*}
where $y_i^*$ and $z_i^*$, $i=1,\dots,k$, are the positive components of the equilibrium $\mathcal E_{k+1}$. In addition,  $z_i=0 \ \text{for} \ k+1 \leq i\leq n$, and $y_i=0 \ \text{for} \ k+2 \leq i\leq n$, and
  \begin{align*}
  \sum_{i=1}^k \mathcal R_i y_i &=  \mathcal R_{k+1}-1 - \mathcal R_{k+1}\bar y_{k+1} \label{Ri2} \tag{19}
  \end{align*}
Noticing that (\ref{inv2}) is the same system as (\ref{inv1}) with different equilibrium components, we conclude that for any solution, the following holds:
$$\lim_{T\rightarrow\infty}\frac{1}{T} \int\limits_0^T y_i(t)\,dt=y_i^*, \qquad \lim_{T\rightarrow\infty}\frac{1}{T} \int\limits_0^T z_i(t)\,dt=z_i^*,$$ for all $i=1,\dots,k$.  Also, on the invariant set $L$, we have that by taking the asymptotic average of (\ref{Ri2}), $y_{k+1}= y_{k+1}^*=1-\frac{\mathcal Q_k}{\mathcal R_{k+1}}$.  In addition, similar arguments show uniform persistence. 

Also, using the fact that $y_{k+1}= y_{k+1}^*$, we find that formula (\ref{Ri2}) can be expressed as:
 \begin{align*}
  \sum_{i=1}^k \mathcal R_i y_i &=  \sum_{i=1}^k \mathcal R_i y_i^*,
  \end{align*}
  where these particular equilibrium components of $\mathcal E_{k+1}$ are the same as $\bar{\mathcal E}_k$.  Thus, system (\ref{inv2}) on the invariant set $L$ is also of the same form as in the case $\bar{\mathcal E}_k$, and the global attractor consists of uniformly ``persistent'' and bounded solutions to (\ref{invariantODE}).

For the case 2, i.e. $\mathcal R_1 > 1$ and $\mathcal R_1 \leq \mathcal Q_1$, then we can show $\mathcal E_1$ is globally asymptotically stable.  Indeed, $\dot W=0$ iff $x=x^*=\frac{1}{\mathcal R_{1} }$, $y_1=y_1^*=1-\frac{1}{\mathcal R_1}$, and all other components are zero.  Therefore in this case $L$ consists solely of the equilibrium $\mathcal E_1$.  

For the case 3(b) with $k=1$, i.e. $\mathcal Q_1<\mathcal R_2\leq \mathcal Q_2$, then we prove $\mathcal E_2$ is globally asymptotically stable.  Here, $\dot W=0$ iff $x=x^*=\frac{1}{\mathcal R_{1} }$, $y_2=y_2^*=1-\frac{\mathcal Q_1}{\mathcal R_2}$, $y_i=0$ for $i\geq 3$, $z_i=0$ for $i\geq 2$, and $\sum_{i=1}^2 \mathcal R_iy_i=\sum_{i=1}^2 \mathcal R_iy_i^*$.  The last relation implies that $y_1=y_1^*$.  Then $\dot{y}_1=0$ implies $z_1=z_1^*$.  Thus $L$ consists solely of the equilibrium $\mathcal E_2$.  For the case $k=2$, since orbits on the invariant set $L$ satisfy the reduced system (\ref{reODEn}) and $\sum_{i=1}^2 \mathcal R_iy_i=\sum_{i=1}^2 \mathcal R_iy_i^*$, the argument in case 3(a) for global stability of $\bar{\mathcal E}_2$ also works to prove that $\mathcal E_3$ is globally asymptotically stable.  
\end{proof}

\bibliography{Nested_References}
\bibliographystyle{spmpsci}
\end{document}